\newtheorem{definition}{Definition}
\newtheorem{theorem}{Theorem}
\newtheorem{claim}[theorem]{Claim}
\newtheorem{corollary}[theorem]{Corollary}
\newcommand{\sq}{\hbox{\rlap{$\sqcap$}$\sqcup$}}
\newcommand{\qed}{\hspace*{\fill}\sq}
\newenvironment{proof}{\noindent {\bf Proof.}\ }{\qed\par\vskip 4mm\par}
\newcommand{\PPP}{\textsf{PPP}\xspace}
\newcommand{\SPP}{\textsf{SPP}\xspace}
\newcommand{\MPP}{\textsf{MPP}\xspace}
\newcommand{\LVP}{\textsf{LVP}\xspace}
\newcommand{\removed}[1]{}
\title{The Dynamics of Probabilistic Population Protocols 
\thanks{This work has been partially supported by the ICT Programme of the European Union under contract number ICT-2008-215270 (\textsf{FRONTS}).} }
\author{Ioannis Chatzigiannakis
\thanks{Contact author} \\
R.A. Computer Technology Institute \\
and CEID, University of Patras\\
26500 Greece\\
\texttt{ichatz@cti.gr} \\
\and
Paul G. Spirakis\\
R.A. Computer Technology Institute \\
and CEID, University of Patras\\
26500 Greece\\
\texttt{spirakis@cti.gr}
}
\date{}
\begin{document}
\maketitle


\begin{abstract}
We study here the dynamics (and stability) of Probabilistic Population Protocols, via the differential equations approach. We provide a quite general model and we show that it includes the model of Angluin et. al. \cite{AADFP04}, in the case of very large populations. For the general model we give a sufficient condition for stability that can be checked in polynomial time. We also study two interesting subcases: (a) protocols whose specifications (in our terms) are configuration independent. We show that they are always stable and that their eventual subpopulation percentages are actually a Markov Chain stationary distribution. (b) protocols that have dynamics resembling virus spread. We show that their dynamics are actually similar to the well-known Replicator Dynamics of Evolutionary Games. We also provide a sufficient condition for stability in this case.
\end{abstract}

\section{Introduction}
\label{sec:Intro}

In the near future, it is reasonable to expect that new types of systems will appear, designed or emerged, of massive scale, expansive and permeating their environment, of very heterogeneous nature, and operating in a constantly changing networked environment. Such systems are expected to operate even beyond the complete understanding and control of their designers, developers, and users. Although they will be perpetually adapting to a constantly changing environment, they will have to meet their clearly-defined objectives and provide guarantees about certain aspects of their own behavior.

We expect that most such systems will have the form of a very large society of networked artefacts. Each such artefact will be unimpressive: small, with limited sensing, signal processing, and communication capabilities, and usually of limited energy. Yet by cooperation, they will be organized in large societies to accomplish tasks that are difficult or beyond the capabilities of todays conventional centralized systems. These systems or societies should have particular ways to achieve an appropriate level of organization and integration. This organization should be achieved seamlessly and with appropriate levels of flexibility, in order to be able to achieve their global goals and objectives. 

Angluin et al. \cite{AADFP04,AADFP06} introduced the notion of a computation by a population protocol to model such distributed systems in which individual agents are extremely limited and can be represented as finite state machines. In their model, finite-state, and complex behavior of the system as a whole emerges from the rules governing pairwise interaction of the agents. The computation is carried out by a collection of agents, each of which receives a piece of the input. These agents move around and information can be exchanged between two agents whenever they come into contact with each other. The goal is to ensure that every agent can eventually output the value that is to be computed (assuming a fairness condition on the sequence of interactions that occur). 

In \cite{AADFP04} they also proposed a natural probabilistic variation of the standard population protocol model, in which finite-state agents interact in pairs under the control of an adversary scheduler. In this variant, interactions that occurs between pairs of agents are chosen uniformly at random. We call the protocols of \cite{AADFP04} by the term ``Probabilistic Population Protocols'' (\PPP). In \cite{AAE06} they presented fast algorithms for performing computations in this variation and showed how to use the notion of a leader in order to efficiently compute semilinear predicates and in order to simulate efficiently LOGSPACE Turing Machines. \cite{DF01} studied the acquisition and propagation of knowledge in the probabilistic model of random interactions between all paris in a population (conjugating automata). 

In this work we characterize the dynamics of population protocols by examining the rate of growth of the states of the agents as the protocol evolves. We imagine here a continuoum of agents. By the law of large numbers, one can model the undelrying aggregate stochastic process as a deterministic flow system. Our main proposal here is to exploit the powerful tools of continuous nonlinear dynamics in order to examine questions (such as stability) of such protocols. 

We first provide a very general model for population protocol continuous dynamics. This model (Switching Population Protocols -- \SPP) includes the probabilistic population protocols (\PPP) of \cite{AADFP04} as a special case, when the population is infinite and the time is continuous.

We show a sufficient condition for stability of \SPP that can be checked in \textit{polynomial time}. We also examine two subclasses of \SPP:
\begin{itemize}

\item The \textit{Markovian Population Protocols} (\MPP). In these protocols, their \textit{specifications are configuration independent}. In this very practical case, we show that \MPP are \textit{always stable} and their \textit{unique} population mix at stability is exacly the steady-state distribution of a \textit{Markov Chain}.

\item The \textit{Linear Viral Protocols} (\LVP). They are probabilistic protocols motivated by the ``random pairing'' of \cite{AADFP04}. However, agents review their current state at a higher rate when they have weak ``immunity''. We view this as a general model for the dynamics of viruses spread in the population. We show that \LVP are equivalent to the well-known ``Replicator Dynamics'' of Evolutionary Game Theory, and thus to the well-known Lotka-Volterra dynamics. We also give a sufficient condition for stability of \LVP, based on Potentials.

\end{itemize}

\section{The General Model (Switching Probabilistic Protocols -- \SPP)}

The network is modeled as a complete graph $G$ where vertices represent nodes and edges represent communication links between nodes. We use the letter $n$ to denote $|V|$, the number of nodes in the network. Each node is capable of executing an ``agent'' (or process) which consists of the following components:

\begin{itemize}

\item $K$, a finite set of states. We use the letter $k$ to denote $|K|$.

\item $X$, a nonempty subset of $K$, known as the inital states or start states.

\end{itemize}

We consider a large population of $n$ agents. Let $q \in K$ be a state of the agent and let $n_q$ the number of agents that are on the given state $p$. Then the total population size is $n=\sum_{i=1}^{k} n_i$. The proportion of agents that are at state $q$ is $x_q = \frac{n_q}{n}$. We call $x_q$ the \textit{density} of $q$. In the sequel $q=q_i$, where $i \in \{1,2,\ldots,k\}$.

A state assignment of a system is defined to be an assignment of a state to each agent in the system. A \textit{configuration} $C$ is a map from the population to states, giving the current state of every agent. The population state then, at time $t$, can be described via a vector $\vec{x}(t) = \left(x_1(t), \ldots, x_k(t)\right)$. Here $x_i(t) = \frac{n_i}{n}$, $i=1 \ldots k$.

In the sequel we assume that $n \rightarrow \infty$. We are interested, thus, in the evolution of $\vec{x}(t)$ as time goes on. We use a different model (compared to \cite{AADFP04}) for describing a protocol $P$. We imagine that all agents in the population are infinitely lived and that they interact forever. Each agent sticks to some state in $K$ for some time interval, and now and then \textit{reviews} her state. This depends on $\vec{x}(t)$ and may result to a change of state of the agent. Based on this concept, a \textit{switching population protocol} consists of the following two basic elements (specifications):

\begin{enumerate}

\item A specification of the \textit{time rate} at which agents in the population review their state. This rate may depend on the current, ``local'', performance of the agent's state and also on the configuration $\vec{x}(t)$.

\item A specification of the \textit{switching probabilities} of a reviewing agent. The probability that an agent, currently in state $q_i$ at a review time, will \textit{switch} to state $q_j$ is in general a function $p_{ij}\left(\vec{x}(t)\right)$, where $p_i\left(\vec{x}\right) = \left(p_{i1}\left(\vec{x}\right), \ldots, p_{ik}\left(\vec{x}\right) \right)$ is the resulting distribution over the set $K$ of states in the protocol.

\end{enumerate}

In a large, finite, population $n$, we assume that the review times of an agent are the ``birth times'' of a Poisson process of rate $\lambda_i\left(\vec{x} \right)$. At each such time, the agent $i$ selects a new state according to $p_i\left(\vec{x}\right)$. We assume that all such Poisson processes are independent. Then, the aggregate of review times in the sub-population of agents in state $q_i$ is itself a Poisson process of birth rate $x_i\lambda_i\left(\vec{x}\right)$. As in the probabilistic model of \cite{AADFP04} we assume that state switches are independent random variables accross agents. Then, the rate of the (aggregate) Poisson process of switches from state $q_i$ to state $q_j$ in the whole population is just 
$x_i(t) \lambda_i\left(\vec{x}(t)\right) p_{ij}\left(\vec{x}(t)\right)$.

When $n\rightarrow\infty$, we can model the aggregate stochastic processes as deterministic flows (see, e.g., \cite{MU05,NW}). The outflow from state $q_i$ is $\sum_{j\ne i} x_j \lambda_j\left(\vec{x}\right)p_{ij}\left(\vec{x}\right)$. Then, the rate of change of $x_i(t)$ (i.e. $\frac{dx_i(t)}{dt}$ or $\dot{x_i}(t)$) is just
\begin{equation}\label{equ:1}
\dot{x_i}\, = \, \sum_{j \in K} x_j p_{ji} \left(\vec{x}\right) \lambda_j \left(\vec{x} \right) \, - \, \lambda_i \left(\vec{x}\right) x_i
\end{equation}
for $i = 1, \ldots, k$.

We assume here that both $\lambda_i\left(\vec{x}\right)$ and $p_{ij}\left(\vec{x}\right)$ are Lipschitz continuous functions in an open domain $\Sigma$ containing the simplex $\Delta$ where
$$
\Delta \, = \, \left\{ \left(x_i, \ldots, x_k \right) : \sum_{i=1}^{K} x_i = 1 \, , \qquad  x_i \ge 0 \ , \  \forall i \right\}
$$

By the theorem of Picard-Linderl\"of (see, e.g., \cite{HS74} for a proof), Eq.~\ref{equ:1} has a \textit{unique} solution for any initial state $\vec{x}(0)$ in $\Delta$ and such a solution trajectory $\vec{x}(t)$ is \textit{continuous} and never leaves $\Delta$.

\subsection{\SPP includes the probabilistic population protocols}

We now show that our model of Switching Probabilistic Protocols (\SPP) is more general than the model of \cite{AADFP04} in the sense that it can be used to define the Probabilistic Population Protocols (\PPP). We do this by showing the following:
\begin{theorem}{\rm
The continuous time dynamics of \PPP (when $n \rightarrow \infty$) are a special case of the dynamics of \SPP.
}\end{theorem}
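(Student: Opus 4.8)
The plan is to realize the mean-field dynamics of a \PPP directly as an instance of Eq.~\ref{equ:1}, by an explicit choice of the two \SPP specifications. Recall that a \PPP is given by a finite state set $Q$ (which we take as the state set $K$) together with a transition function $\delta:Q\times Q\to Q\times Q$, written $\delta(a,b)=(\delta_1(a,b),\delta_2(a,b))$, and a scheduler that repeatedly picks an ordered pair of distinct agents uniformly at random: if their states are $(q_j,q_\ell)$ then the first agent moves to $\delta_1(q_j,q_\ell)$ and the second to $\delta_2(q_j,q_\ell)$. The input and output maps of a population protocol do not affect the trajectory $\vec{x}(t)$, so they play no role here.

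First I would write down the fluid limit of a \PPP. Computing the expected one-step change of the count $n_i$ and grouping the ordered pairs according to their pair of states, one obtains a drift proportional to $\sum_{j,\ell}x_jx_\ell$ times the net effect of the pair $(q_j,q_\ell)$ on state $q_i$; the $1-1/n$ correction coming from the ban on self-interaction disappears as $n\to\infty$. Under the natural time scale in which each agent takes part in $\Theta(1)$ interactions per unit of time, the standard concentration machinery for Markovian pairwise-interaction processes (cf.\ \cite{MU05,NW}) gives that $\vec{x}(t)$ converges, uniformly on compact time intervals, to the solution of
\[
\dot{x_i}\;=\;\sum_{(j,\ell)\,:\,\delta_1(q_j,q_\ell)=q_i} x_j x_\ell \;+\; \sum_{(j,\ell)\,:\,\delta_2(q_j,q_\ell)=q_i} x_j x_\ell \;-\; 2x_i ,
\]
where the identities $x_i=\sum_\ell x_i x_\ell$ and $x_i=\sum_j x_j x_i$ on $\Delta$ have been used to package the outflow, the contributions of pairs in which the reviewing agent keeps its state canceling between the gain sums and the term $-2x_i$.

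Next I would exhibit the \SPP that reproduces this ODE. Take the review rates to be a positive constant, say $\lambda_i(\vec{x})\equiv 2$ (the freedom to let $\lambda_i$ depend on the state and on the configuration is precisely where \SPP is more general than \PPP), and put
\[
p_{ij}(\vec{x})\;=\;\frac12\sum_{\ell\,:\,\delta_1(q_i,q_\ell)=q_j} x_\ell \;+\; \frac12\sum_{m\,:\,\delta_2(q_m,q_i)=q_j} x_m .
\]
In words: a reviewing agent in state $q_i$ flips a fair coin; on heads it plays initiator against a partner sampled according to $\vec{x}$ and jumps to $\delta_1(q_i,\cdot)$, on tails it plays responder and jumps to $\delta_2(\cdot,q_i)$. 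Each $p_{ij}$ is linear, hence Lipschitz, on any open $\Sigma\supseteq\Delta$, is nonnegative, and satisfies $\sum_j p_{ij}(\vec{x}) = \frac12\sum_\ell x_\ell + \frac12\sum_m x_m = 1$ on $\Delta$, so $p_i(\vec{x})$ is a genuine probability distribution over $K$ and the hypotheses under which Eq.~\ref{equ:1} was derived hold. Note that in \SPP only the reviewing agent changes state, whereas in a \PPP interaction both participants do; this decoupling is harmless in the limit, since it affects only the fluctuations of the process, not the mean drift that defines the ODE.

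Finally, substitute this data into Eq.~\ref{equ:1}. With $\lambda_j\equiv 2$ it becomes $\dot{x_i} = 2\left(\sum_{j\in K} x_j p_{ji}(\vec{x}) - x_i\right)$; expanding $p_{ji}$ and relabeling the dummy indices in the responder sum turns $\sum_j x_j p_{ji}(\vec{x})$ into $\frac12\sum_{\delta_1(q_j,q_\ell)=q_i} x_j x_\ell + \frac12\sum_{\delta_2(q_j,q_\ell)=q_i} x_j x_\ell$, while $x_i=\frac12\sum_\ell x_i x_\ell + \frac12\sum_j x_j x_i$ supplies the outflow, so Eq.~\ref{equ:1} with this data is identical to the \PPP fluid ODE above. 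Since the chosen $\lambda_i$ and $p_{ij}$ are Lipschitz in a neighborhood of $\Delta$, that ODE has, by Picard--Lindel\"of, a unique solution from each $\vec{x}(0)\in\Delta$ which never leaves $\Delta$; it coincides with the \PPP fluid trajectory, which is the claim. The one genuinely nontrivial ingredient is the second step --- the justification that the discrete-time stochastic process of \cite{AADFP04} is well approximated by the deterministic flow as $n\to\infty$ --- which is exactly where the infinite-population hypothesis enters and which we handle by appealing to the usual mean-field/concentration arguments; the remainder is a short, if mildly fiddly, bookkeeping of the initiator and responder roles.
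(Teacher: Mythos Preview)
Your reduction is correct, but it differs from the paper's in the choice of \SPP data. You take the constant review rate $\lambda_i\equiv 2$ and push all of the \PPP structure into the switching probabilities $p_{ij}(\vec{x})$ via the fair-coin initiator/responder decomposition, then recover the outflow $-2x_i$ from the identity $x_i=\sum_\ell x_i x_\ell$ on $\Delta$. The paper instead makes $\lambda_i$ configuration-dependent, setting $\lambda_i(\vec{x})=\sum_m x_m$ over all rules having $q_i$ on the left, and defines $p_{ri}(\vec{x})=\frac{1}{\lambda_r}\sum_{m\in C(r,i)} x_m$, after first restricting (``without loss of generality'') to rules $(q_r,q_m)\mapsto(q_i,q_j)$ with $r\neq m$ and $i\neq j$; the outflow term then matches the \PPP outflow literally rather than through a simplex identity. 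Your route is arguably cleaner: it handles the full transition function $\delta$ without that distinct-states restriction, and it incidentally shows that configuration-dependent review rates are not even needed to embed \PPP into \SPP. The paper's route stays closer to the rule-based language of \cite{AADFP04} and makes the worked example with three rules transparent. Either instantiation of Eq.~\ref{equ:1} proves the theorem.
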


\begin{proof}
According to \cite{AADFP04}, the discrete-time dynamics of a Probabilistic Population Protocol (\PPP) are given by a finite set of rules, $R$ of the form
$$
(p,q) \mapsto (p',q')
$$
where $p,q,p',q' \in K$ ($K=\{q_1, \ldots, q_k\}$) together with a set $A$ of $n$ agents and an (irreflexive) relation $E \subseteq A \times A$. 

Intruitively, a $(u,v) \in E$ means that $u,v$ are able to interact. \cite{AADFP04} assumes further that $E$ consists of all ordered pairs of distinct elements from $A$.

A \textit{population configuration} in \cite{AADFP04} is a mapping $C: A \mapsto K$ ($K$ is the set of states). Let $C$ and $C'$ be population configurations, and $u,v$ be two distinct agents. \cite{AADFP04} says that $C$ can go to $C'$ in one discrete step (denoted $C \stackrel{e}{\mapsto} C'$) via an \textit{encounter} $e=(u,v)$ if
$$
\left( C(u), C(v) \right) \mapsto \left(C'(u), C'(v)\right)
$$
is a rule in $R$. This means that the state $C(u)$ of $u$ switches to $C'(u)$ and also $C(v)$ switches to $C'(v)$.

The execution of the system is defined to be a sequence $C_0, C_1, C_2, \ldots$ of configurations (where $C_0$ is the initial configuration) such that for each $i$, $C_i \mapsto C_{i+1}$. An execution is fair if for any $C_i$ and $C_j$, such that $C_i \mapsto C_j$ and $C_i$ occurs infinitely often in the execution, $C_j$ also occurs infinitely often in the execution.

In the probabilistic version of the above, \cite{AADFP04} further states that $e$ (the ordered pair to interact) is chosen at random, independently and uniformly from all ordered pairs corresponding to edges $e$ in $A \times A$ (\cite{AADFP04} calls it the model of Conjugating Automata, inspired also by \cite{DF01}).

Let us now assume that $n \rightarrow \infty$ and let $x_i = \lim_{n \rightarrow \infty} \frac{n_i}{n}$ be the population fraction at state $q_i \in K$ at a particular configuration $C$, at time $t$. Consider the rule $\rho$ in $R$
$$
(q_r, q_m) \mapsto (q_i, q_j)
$$
%
%
%
%
Without loss of generality we assume in the sequel that $r \ne m$ and $i \ne j$ in such rules $\rho$ in $R$.
By the uniformity and randomness, the probability that such an $e$, that follows from rule $\rho$, is selected (as the encounter), is just $x_r(t) x_m(t)$. Let $A_i$ be the set of all $(r,m)$ that are the left part of a rule $\rho$:
\begin{eqnarray*}
(q_r, q_m) &\mapsto& (q_i, q_j) \\
\mbox{or} \qquad (q_r, q_m) &\mapsto& (q_j, q_i)
\end{eqnarray*}
Let $B_i$ be the set of $(r,m)$ that are the left part of a rule $\rho'$:
$$
(q_r, q_m) \mapsto (q_{r'}, q_{m'})
$$
with $r=i$ or $m=i$. Without loss of generality let $r=i$ in $\rho'$. By considering a small interval $\Delta t$ and taking limits as $\Delta t \rightarrow 0$, due to fairness we get $\forall i$:
\begin{equation}\label{equ:1star}
\dot{x_i} \, = \, \sum_{(r,m) \in A_i} x_r(t) x_m(t) \, - \, x_i(t) \sum_{(i,m) \in B_i} x_m(t)
\end{equation}
The above set of equations describe the continuous dynamics of \PPP.

Now, consider our \SPP dynamics and Eq.~\ref{equ:1}.
Set $\lambda_i\left(\vec{x}\right) = \sum x_m(t)$, with $m$ ranging over all rules
$$
(q_r, q_m) \mapsto (q_{r'}, q_{m'})
$$
with $r=i$, and all rules
$$
(q_m, q_r) \mapsto (q_{r'}, q_{m'})
$$
with $r=i$ (i.e., over all rules in $B_i$).

Also, set $p_{mi} = p_{ri}$ = 0, if $r,m$ do not belong in any tuple of $A_i$.

Finally set
$$
p_{ri} = \frac{1}{\lambda_r} \sum_{m \in C(r,i)} x_m(t)
$$
where $C(r,i)$ is the set of indices $m$ in the second argument of the left part of rules in $A_i$
(i.e. $(q_r, q_m) \mapsto (q_{r'}, q_{m'})$ with $r'=i$ or $m'=i$).

Then our system of Eq.~\ref{equ:1} (the \SPP dynamics) becomes the system of Eq.~\ref{equ:2} (the \PPP dynamics). Thus the \PPP dynamics are a special case of the \SPP dynamics in the continuous time setting.
\end{proof}

Here is an example of the reduction described above. Let the rules $R$ in \PPP be
\begin{eqnarray*}
(q_1, q_2) &\mapsto& (q_3, q_2) \\
(q_3, q_1) &\mapsto& (q_1, q_2) \\
(q_2, q_3) &\mapsto& (q_2, q_1)
\end{eqnarray*}
This gives the continuous \PPP dynamics:
\begin{eqnarray*}
\dot{x_1} &=& x_1x_3 \, + \, x_2x_3 \, - \, x_1 \left(x_2 + x_3 \right) \\
\dot{x_2} &=& x_1x_3 \, + \, x_1x_2 \, + \, x_2x_3 \, - \, x_2 \left(x_1 + x_3 \right) \\
\dot{x_3} &=& x_1x_2 \, - \, x_3 \left(x_1 + x_2 \right) 
\end{eqnarray*}
We then set
\begin{eqnarray*}
\lambda_1 &=& x_2 \, + \, x_3 \\
\lambda_2 &=& x_1 \, + \, x_3 \\
\lambda_3 &=& x_1 \, + \, x_2 
\end{eqnarray*}
and
$$
\begin{array}{l l l}
p_{21} = \frac{x_3}{x_1 + x_3} \qquad \qquad & p_{11} = \frac{x_3}{x_2 + x_3} \qquad \qquad & p_{31} = 0\\
p_{12} = \frac{x_3}{x_2 + x_3} & p_{22} = \frac{x_1}{x_1 + x_3} & p_{32} = \frac{x_2}{x_1 + x_2} \\
p_{13} = \frac{x_2}{x_2 + x_3} & p_{23} = p_{33} = 0 & 
\end{array}
$$
and this results in our \SPP dynamics, namely:
\begin{eqnarray*}
\dot{x_1} &=& x_1 \lambda_1 p_{11} \, + \, x_2 \lambda_2 p_{21} \, + \, x_3 \lambda_3 p_{31} \, - \, x_1 \lambda_1 \\
\dot{x_2} &=& x_1 \lambda_1 p_{12} \, + \, x_2 \lambda_2 p_{22} \, + \, x_3 \lambda_3 p_{32} \, - \, x_2 \lambda_2 \\
\dot{x_3} &=& x_1 \lambda_1 p_{13} \, + \, x_2 \lambda_2 p_{23} \, + \, x_3 \lambda_3 p_{33} \, - \, x_3 \lambda_3
\end{eqnarray*}

\section{Stability of nonlinear dynamic systems: a sufficient condition for decidability.}

Let us consider a dynamic system 
$$
\dot{x_i} \, = \, f_i \left(\vec{x}\right) \, , \qquad i= 1, \ldots, k
$$
that is, in fact, more general than Eq.~\ref{equ:1}.

\begin{definition}[Fixed Points]{\rm
Let $\vec{x}^*$ be a solution of the system $\left\{ f_i \left(\vec{x}^* \right) = 0, \, i=1,\ldots, k \right\}$ which we call a \textit{fixed point} of the system.
}\end{definition}

By making a Taylor expansion around $\vec{x}^*$ we obtain a linear approximation to the dynamics:
$$
\dot{x_i} \, = \, \sum \left(x_j - x_j^* \right) \frac{df_i}{dx_j} \left(\vec{x}^* \right)
$$
Setting $\xi_i = x_i - x_i^*$ we get
$$
\dot{\xi_i} \, = \, \sum \xi_j \frac{df_i}{dx_j} \left(\vec{x}^* \right)
$$
which is a Linear System with a fixed point at the origin, i.e., $\dot{\xi} = L \xi$ where the matrix $L$ has \textit{constant} components $L_{ij} = \frac{df_i}{dx_j} \left(\vec{x}^* \right)$. $L$ is called the Jacobian Matrix. Then, by the theorem of \cite{H60} we have
\begin{corollary}{\rm
If the fixed point $\vec{x}^*$ is \textit{hyperbolic} (i.e., all eigenvalues of $L^*$ have a non-zero real part) then the topology of the dynamics of the nonlinear system around $\vec{x}^*$ is the same as the topology of a $\vec{x}^*$ in the Linear system.
}\end{corollary}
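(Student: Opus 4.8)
The plan is to recognize this statement as nothing more than the Hartman--Grobman theorem (the ``theorem of \cite{H60}'' already cited above), and to carry out the routine change of variables needed to bring Eq.~\ref{equ:1} into the form that theorem requires. First I would translate the origin to the fixed point by setting $\vec{\xi} = \vec{x} - \vec{x}^*$, so that the dynamics become $\dot{\vec{\xi}} = L\vec{\xi} + \vec{g}(\vec{\xi})$, where $L$ is the Jacobian matrix with the constant entries $L_{ij} = \frac{df_i}{dx_j}(\vec{x}^*)$ introduced above and $\vec{g}$ collects the remaining terms of the Taylor expansion. By construction $\vec{g}(\vec{0}) = \vec{0}$ and $D\vec{g}(\vec{0}) = 0$, so on a sufficiently small ball around the origin both $\|\vec{g}(\vec{\xi})\|$ and the Lipschitz constant of $\vec{g}$ can be made as small as we wish; this is the ``small nonlinear perturbation of a hyperbolic linear system'' setting in which Hartman--Grobman operates.

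Next I would use hyperbolicity. Since every eigenvalue of $L$ has nonzero real part, the state space decomposes as a direct sum $E^s \oplus E^u$ of $L$-invariant subspaces, the linear flow $e^{Lt}$ being an exponential contraction on $E^s$ and an exponential expansion on $E^u$, with no center directions. Under exactly this hypothesis, Hartman--Grobman provides a neighbourhood $U$ of $\vec{0}$, a neighbourhood $V$ of $\vec{0}$, and a homeomorphism $h : U \to V$ that carries orbit segments of the nonlinear flow $\dot{\vec{\xi}} = L\vec{\xi} + \vec{g}(\vec{\xi})$ onto orbit segments of the linear flow $\dot{\vec{\xi}} = L\vec{\xi}$ while preserving the direction of time, i.e.\ a local topological conjugacy of the two flows. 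Translating back via $\vec{x} = \vec{x}^* + \vec{\xi}$, this says precisely that the nonlinear dynamics of Eq.~\ref{equ:1} near $\vec{x}^*$ are topologically the same as those of the linearized system $\dot{\vec{\xi}} = L\vec{\xi}$ near the origin, which is the assertion of the corollary.

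The only genuine content beyond quoting \cite{H60} is verifying its regularity hypothesis. The classical version asks that the vector field be $C^1$, whereas in this paper we have so far assumed only that the $\lambda_i$ and $p_{ij}$ (hence the $f_i$) are Lipschitz; strictly, one should either promote that assumption to $C^1$ for the purposes of this section, or cite one of the sharpened versions of the theorem that hold for merely Lipschitz vector fields with hyperbolic linear part. I expect this smoothness bookkeeping --- not any dynamical argument --- to be the main (indeed essentially the only) obstacle: once the splitting $\vec{f} = L\vec{\xi} + \vec{g}$ with $\vec{g}$ suitably small near the origin is in hand, the conclusion follows at once from the cited theorem.
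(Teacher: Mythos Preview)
Your proposal is correct and follows the same approach as the paper: both simply invoke Hartman's theorem \cite{H60} after the Taylor-expansion linearization already carried out in the text. In fact the paper gives no proof at all beyond the words ``by the theorem of \cite{H60} we have,'' so your write-up is considerably more detailed than what appears there; in particular, your observation about the $C^1$ versus Lipschitz regularity mismatch is a genuine point that the paper does not address.
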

In fact, let each eigenvalue of $L$ be $\phi = \texttt{a} + i \omega$.
\begin{corollary}{\rm
Let $\texttt{a} \ne 0$, $\forall \phi$ eigenvalues of $L$. Then
\begin{itemize}

\item[(a)] If $\texttt{a} < 0$, $\forall \phi$ then $\vec{x}(t)$ approaches the fixed point $\vec{x}^*$ as $t \rightarrow \infty$.

\item[(b)] If there exists a $\phi$ with $\texttt{a} > 0$ then $\vec{x}(t)$ \textit{diverges} from the fixed point $\vec{x}^*$ along the direction of the corresponding eigenvector. That is, the fixed point $\vec{x}^*$ is unstable.

\end{itemize}
}\end{corollary}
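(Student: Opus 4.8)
The plan is to read off both parts from the preceding corollary (the Hartman--Grobman statement) together with the elementary description of the linear flow $\dot{\xi}=L\xi$. First I would note that the hypothesis ``$\texttt{a}\neq 0$ for every eigenvalue $\phi=\texttt{a}+i\omega$ of $L$'' is exactly the hyperbolicity of $\vec{x}^{*}$, so the previous corollary applies: on a neighbourhood of $\vec{x}^{*}$ there is a homeomorphism carrying the orbits of $\dot{\vec{x}}=f(\vec{x})$ onto the orbits of $\dot{\xi}=L\xi$ and preserving the direction of time. Because ``the orbit through a point converges to the fixed point'' and ``the orbit through a point leaves a fixed neighbourhood'' are both invariant under such a topological conjugacy, it suffices to verify each claim for the linear system.

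For part (a), assume $\texttt{a}<0$ for every eigenvalue of $L$. Putting $L$ in real Jordan form, every entry of $e^{Lt}$ is a finite combination of terms $t^{m}e^{\texttt{a}_{j}t}\cos(\omega_{j}t)$ and $t^{m}e^{\texttt{a}_{j}t}\sin(\omega_{j}t)$; choosing $\alpha$ with $0<\alpha<-\max_{j}\texttt{a}_{j}$ gives a constant $C$ with $\|e^{Lt}\|\le C\,e^{-\alpha t}$ for all $t\ge 0$, hence $\xi(t)=e^{Lt}\xi(0)\to 0$ for every initial value, i.e.\ the origin is asymptotically stable for the linear flow. Transporting this through the conjugacy shows $\vec{x}(t)\to\vec{x}^{*}$ for every trajectory that ever enters a small enough neighbourhood of $\vec{x}^{*}$. (One may also bypass Hartman--Grobman here: when all eigenvalues of $L$ lie in the open left half-plane the Lyapunov equation $L^{\top}P+PL=-I$ has a positive definite solution $P$, and $V(\vec{\xi})=\vec{\xi}^{\top}P\vec{\xi}$ is then a strict Lyapunov function for the full nonlinear system near $\vec{x}^{*}$; this is Lyapunov's indirect method and gives (a) directly.)

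For part (b), suppose some eigenvalue $\phi=\texttt{a}+i\omega$ has $\texttt{a}>0$. Then $L$ has a nonzero unstable subspace $E^{u}$ --- the real span of the real and imaginary parts of the generalized eigenvectors belonging to eigenvalues of positive real part --- and on $E^{u}$ the linear flow expands at least like $e^{\texttt{a}t}$, so every nonzero solution starting in $E^{u}$ leaves each bounded neighbourhood of the origin. Hence $\dot{\xi}=L\xi$ is unstable at $0$, and by the conjugacy of the previous corollary $\vec{x}^{*}$ is unstable for the nonlinear system. To obtain the sharper statement that the divergence is ``along the eigenvector of $\phi$'', I would invoke the unstable manifold theorem: the nonlinear system has a local unstable manifold through $\vec{x}^{*}$ of dimension $\dim E^{u}$, tangent to $E^{u}$ at $\vec{x}^{*}$, on which every orbit other than $\vec{x}^{*}$ moves away from $\vec{x}^{*}$ --- with the understanding that, when $\omega\neq 0$, ``the direction of the eigenvector'' means the real $2$-plane spanned by its real and imaginary parts.

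The linear norm estimate and the invocations of \cite{H60,HS74} are routine; the one genuine point of care is the meaning of ``all eigenvalues of $L$'' when the dynamics are those of \SPP (Eq.~\ref{equ:1}). There the identity $\sum_{i}\dot{x}_{i}=0$ forces $\mathbf{1}^{\top}L=0$, so $0$ is always an eigenvalue of the full $k\times k$ Jacobian and the hyperbolicity hypothesis can never hold for it; the hypothesis and the whole argument must instead be applied on the invariant hyperplane $\{\vec{\xi}:\sum_{i}\xi_{i}=0\}$ --- equivalently, to the $(k-1)\times(k-1)$ Jacobian of the reduced system on the simplex $\Delta$ (recall that, by Picard--Lindel\"of, trajectories never leave $\Delta$, so the divergence in (b) takes place inside $\Delta$) --- after which nothing else in the argument changes.
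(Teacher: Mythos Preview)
Your proof is correct, but you should know that the paper provides \emph{no proof at all} for this corollary: it is stated immediately after the Hartman--Grobman corollary and the citation \cite{H60}, and the authors simply move on to ``Thus we get our main result\ldots''. In other words, the paper treats both (a) and (b) as standard consequences of linearisation plus Hartman--Grobman, without writing out any of the steps you supply (the Jordan-form estimate on $\|e^{Lt}\|$, the Lyapunov-equation alternative, or the unstable-manifold argument for the ``along the eigenvector'' clause). So there is nothing to compare at the level of argument; your write-up is strictly more detailed than what the paper offers.

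Your final paragraph is worth highlighting: the observation that $\sum_i \dot{x}_i \equiv 0$ for the \SPP dynamics of Eq.~\ref{equ:1} forces $\mathbf{1}^{\top}L=0$, so the full $k\times k$ Jacobian always has a zero eigenvalue and the hyperbolicity hypothesis can only be meant on the invariant hyperplane tangent to $\Delta$. The paper does not make this reduction explicit anywhere in Section~3, so you have identified a genuine gap in its presentation rather than in your own argument; the fix you propose (work with the $(k-1)$-dimensional reduced Jacobian on $\Delta$) is the standard and correct one.
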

Thus we get our main result of the system:
\begin{theorem}{\rm
If all fixed points $\vec{x}^*$ of our population dynamics of Eq.~\ref{equ:1} are hyperbolic, then we \textit{can decide stability} of the population protocol, around $x^*$, in \textit{polynomial time} in the description of the protocol.
}\end{theorem}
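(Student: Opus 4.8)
The plan is to reduce the stability question, via the Hartman--Grobman theorem already invoked in the Corollaries above, to a purely linear-algebraic decision about the Jacobian, and then to observe that this decision is a Routh--Hurwitz test that can be carried out in polynomially many arithmetic operations in the size of the protocol description.

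\textbf{Step 1: reduction to the spectrum of the Jacobian.} Fix a fixed point $\vec{x}^*$ of Eq.~\ref{equ:1}. By the Corollaries above, since $\vec{x}^*$ is hyperbolic the local phase portrait of the nonlinear system near $\vec{x}^*$ is topologically equivalent to that of the linearization $\dot{\xi} = L^*\xi$, where $L^*_{ij} = \frac{\partial f_i}{\partial x_j}(\vec{x}^*)$; hence $\vec{x}^*$ is asymptotically stable exactly when every eigenvalue of $L^*$ has strictly negative real part, and is unstable as soon as some eigenvalue has strictly positive real part. Hyperbolicity guarantees no eigenvalue sits on the imaginary axis, so these two cases are exhaustive and the classification is unambiguous. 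Deciding stability around $\vec{x}^*$ is therefore the same as deciding whether the spectrum of $L^*$ lies in the open left half-plane.

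\textbf{Step 2: building $L^*$.} The right-hand sides $f_i$ of Eq.~\ref{equ:1} are given explicitly by the protocol: in the \PPP case each $f_i$ is the quadratic polynomial of Eq.~\ref{equ:1star}, with at most $|R|$ monomials, and in general $f_i$ is assembled from the rational expressions $\lambda_i(\vec{x})$ and $p_{ij}(\vec{x})$ that constitute the description. Each partial derivative $\frac{\partial f_i}{\partial x_j}$ is then an expression of the same order of size, so evaluating it at $\vec{x}^*$ costs polynomially many arithmetic operations and the $k \times k$ matrix $L^*$ (with $k = |K|$) is assembled in polynomial time.

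\textbf{Step 3: Routh--Hurwitz, and the main obstacle.} Compute the characteristic polynomial $\chi(z) = \det(zI - L^*)$ of $L^*$; this takes polynomially many arithmetic operations (e.g.\ by the Faddeev--LeVerrier or Berkowitz algorithm), yielding the degree-$k$ coefficients. Then apply the Routh--Hurwitz criterion to $\chi$: forming the Hurwitz determinants and inspecting their signs decides in $O(k^2)$ further operations whether all roots of $\chi$ lie in the open left half-plane; because $\vec{x}^*$ is hyperbolic, $\chi$ has no imaginary-axis root, so this test is conclusive --- ``all roots in the left half-plane'' means $\vec{x}^*$ is (asymptotically) stable, otherwise $\vec{x}^*$ is unstable. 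The whole procedure thus uses a number of arithmetic operations polynomial in $k$ and in the number of rules (equivalently, in the size of the descriptions of the $\lambda_i$ and $p_{ij}$). I expect the genuinely delicate point to be the bookkeeping of the numbers rather than the algebra: a fixed point of even a quadratic system can have irrational algebraic coordinates, so Steps~2--3 must be carried out over the number field generated by $\vec{x}^*$ (or with an exact sign oracle for the Hurwitz determinants), and one must check the bit-sizes do not blow up --- which is why the statement is naturally phrased \emph{per fixed point}, with $\vec{x}^*$ presented as part of the input. If one instead wants to decide \emph{global} stability, one first has to enumerate all fixed points by solving $\{f_i = \vec{0}\}$; hyperbolicity together with compactness of $\Delta$ forces this set to be finite and isolated, but with no a~priori polynomial bound on its cardinality, so that variant requires an extra hypothesis (e.g.\ $k$ fixed, or a bounded number of fixed points).
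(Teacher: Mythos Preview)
Your proposal is correct and follows the same route the paper takes: reduce via Hartman--Grobman (the two Corollaries) to the sign pattern of the real parts of the eigenvalues of the Jacobian $L^*$, then observe this is a polynomial-time linear-algebraic test. The paper in fact gives no proof beyond the word ``Thus'' preceding the theorem --- it treats the result as an immediate consequence of the Corollaries --- so your Steps~2--3 (explicitly forming $L^*$, computing $\chi(z)$, and running Routh--Hurwitz) supply algorithmic detail the paper omits, and your caveats about irrational coordinates of $\vec{x}^*$ and the per-fixed-point phrasing flag genuine subtleties the paper does not address.
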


\begin{corollary}\label{cor:1}{\rm
If all fixed points of \PPP are hyperbolic, then the stability of \PPP can be decided in polynomial time.
}\end{corollary}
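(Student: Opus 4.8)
The plan is to combine Theorem~1 with the polynomial-time decidability theorem just stated. First, by Theorem~1 the continuous-time dynamics of any \PPP (as $n\to\infty$) are an instance of the system of Eq.~\ref{equ:1} for a suitable choice of rate and switching specifications; equivalently, they are the quadratic polynomial vector field of Eq.~\ref{equ:1star}. The reduction exhibited in the proof of Theorem~1 is completely explicit: the index sets $A_i$, $B_i$, $C(r,i)$, the rates $\lambda_i(\vec{x})$, and the probabilities $p_{ij}(\vec{x})$ are read directly off the finite rule set $R$ over the $k$ states, and their total size is polynomial in the description of the \PPP (there are at most $|R| \le k^{4}$ rules, each contributing a constant number of monomials). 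Hence, when the ambient theorem is instantiated to a \PPP, ``the description of the protocol'' has size polynomial in the original \PPP description, and a polynomial-time decision for Eq.~\ref{equ:1} yields a polynomial-time decision for the \PPP.

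Second, I would replay the argument of that theorem in this concrete setting. Around a fixed point $\vec{x}^{*}$, stability is governed by the signs of the real parts of the eigenvalues of the Jacobian $L^{*}$, through the two Corollaries above: all real parts negative gives asymptotic stability, some real part positive gives instability, and the hypothesis that every fixed point is hyperbolic is exactly what excludes the borderline case. The three computational ingredients are all polynomial-time for the \PPP dynamics: (i) fixed points are the real points of $\Delta$ solving the degree-$\le 2$ polynomial system $\{f_i(\vec{x})=0\}$ together with $\sum_i x_i = 1$; (ii) the entries $L^{*}_{ij} = \frac{df_i}{dx_j}(\vec{x}^{*})$ are degree-$\le 1$ polynomials in $\vec{x}^{*}$, so the characteristic polynomial of $L^{*}$ and its coefficients are computed in polynomial time; (iii) the sign pattern of the real parts of the roots of that characteristic polynomial is then decided from its coefficients alone by the Routh--Hurwitz criterion, with no need to compute the eigenvalues. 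Putting the two theorems together, stability of the \PPP around each hyperbolic fixed point is decided in time polynomial in the description of the \PPP.

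The point that needs care is the universal quantifier ``all fixed points''. A degree-$2$ system in $k$ variables can have up to $2^{k}$ complex solutions by B\'ezout's bound, so one must not claim that a naive enumeration of fixed points is polynomial. The clean reading, which I would make explicit, is that this statement is the per-fixed-point version of the ambient theorem lifted to \PPP: given a (hyperbolic) fixed point, classifying it as stable or unstable is polynomial-time, and the hyperbolicity hypothesis on \emph{all} fixed points is used only to guarantee that the Hartman-type dichotomy applies at whichever fixed point is being tested. Alternatively, the decision can be phrased as a single fixed-size sentence in the first-order theory of real closed fields (``there is a point of $\Delta$ at which all $f_i$ vanish and $L^{*}$ has an eigenvalue of positive real part''); since the number of variables and the polynomial degrees are bounded by the \PPP description, this is decidable within the claimed bound. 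Turning this last route into a fully rigorous complexity statement is the main technical obstacle; the rest is routine.
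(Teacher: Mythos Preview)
Your approach is essentially the paper's: combine Theorem~1 (\PPP dynamics are a special case of Eq.~\ref{equ:1}) with the immediately preceding theorem (hyperbolic fixed points of Eq.~\ref{equ:1} admit a polynomial-time stability test). In fact the paper gives no separate proof of this corollary at all; it is stated as an immediate consequence of those two results, with the ``around $x^{*}$'' phrasing in the theorem already signalling the per-fixed-point reading you adopt.

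Your write-up is considerably more careful than the paper's treatment. The explicit size bound on the reduction, the appeal to Routh--Hurwitz to avoid computing eigenvalues, and especially your discussion of the B\'ezout obstruction to enumerating fixed points are all points the paper simply does not address. Your caveat that the honest interpretation is ``given a hyperbolic fixed point, decide its stability in polynomial time'' is the right resolution and matches the paper's own wording; the alternative route through the first-order theory of the reals is a nice observation but, as you note, would require more work to pin down the complexity and is not what the paper has in mind.
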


%
%
%
%
%
%

\section{Switching Population Protocols with specifications independent of the configuration}

We now consider the special case of Eq.~\ref{equ:1} where $\lambda_i \left( \vec{x} \right) = \lambda_i \> \forall i$ and where $p_{ij} \left( \vec{x} \right) = p_{ij}$ (specifications independent of the configuration $\vec{x}(t)$). Then the basic system of Eq.~\ref{equ:1} of the dynamics of the population becomes:
\begin{equation}\label{equ:2}
\dot{x_i} \, = \, \sum_{j \in K} x_j \lambda_j p_{ji} \, - \, \lambda_i x_i \qquad \qquad i = 1 \ldots k
\end{equation}
We call such protocols by the term ``Markovian Population Protocols'' (\MPP).

Let $q_{ij} = \lambda_i p_{ij}$ for all $i,j$, when $i \ne j$ and when $j = i$ let $q_{ii} = \lambda_i (p_{ii} - 1)$. 
Then Eq.~\ref{equ:2} in fact becomes
\begin{equation}\label{equ:3}
\frac{dx_i(t)}{dt} \, = \, q_{ii} x_i(t) \, + \, \sum_{j \ne i} q_{ki} x_k(t)
\end{equation}
Note that $\sum_{i \in K} x_i (t) = 1$.  But this is, in fact, the basic equation of the limiting-state probabilities of a Markov Chain of $k$ states with $q_{ij}$ being the (continuous time) rates of change (see, e.g., \cite{Klein}, pp.~53--55).

When all $\lambda_{ij}$, $i \ne j$ are non zero then the Markov Chain of Eq.~\ref{equ:3} is irreducible and homogeneous. Then the limits $\lim_{t \rightarrow \infty} x_i(t)$ always exist and are independent of the initial state. The limiting distribution is given \textit{uniquely} as the solution of the following equations:
$$
q_{jj} x_j \, + \, \sum_{k \ne j} q_{kj} x_k \, = \, 0
$$
So, we get our second major result:
\begin{theorem}[Markovian Population Protocols -- \MPP]\label{the:2}{\rm
Let the specifications $\left\{ \lambda_j, p_{ij} \right \}$ independent of the configuration $\vec{x}(t)$. 
Let also $\lambda_j p_{ij} \ne 0$, $\forall i,j$ where $i \ne j$. Then the Population Protocol is \textit{stable}. It always has a 
limiting \textit{unique} configuration $\left \{ x_i \ \ i=1 \ldots k \right \}$ independent of the initial configuration $\vec{x}(0)$, which is exactly the \textit{steady-state distribution} of an \textit{ergodic, homogeneous Markov Chain of $k$ states}.
}\end{theorem}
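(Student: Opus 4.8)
\quad The argument rests on the identification, already carried out above, of Eq.~\ref{equ:3} with the forward Kolmogorov equation $\dot{\vec{x}} = Q^{\top}\vec{x}$ of a continuous--time Markov chain on the $k$ states, whose generator $Q=(q_{ij})$ has off--diagonal entries $q_{ij}=\lambda_i p_{ij}\ge 0$ and zero row sums, since $q_{ii}+\sum_{j\ne i}q_{ij}=\lambda_i(p_{ii}-1)+\lambda_i\sum_{j\ne i}p_{ij}=\lambda_i(\sum_j p_{ij}-1)=0$. The first step is to record that $\Delta$ is forward invariant under Eq.~\ref{equ:3}: summing over $i$ gives $\frac{d}{dt}\sum_i x_i=\sum_k x_k\sum_i q_{ki}=0$, so $\sum_i x_i(t)\equiv 1$ whenever $\sum_i x_i(0)=1$; combined with the Picard--Lindel\"of invariance of $\Delta$ recalled in the general model, every trajectory started in $\Delta$ stays in $\Delta$ for all $t\ge0$.

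Next I would invoke the hypothesis that all off--diagonal rates $q_{ij}=\lambda_i p_{ij}$ with $i\ne j$ are nonzero, hence strictly positive: this makes the chain irreducible (every state jumps directly to every other). The classical theory of finite irreducible continuous--time Markov chains (e.g.\ \cite{Klein}) then provides a \emph{unique} probability vector $\vec{\pi}$ solving $Q^{\top}\vec{\pi}=0$, with $\vec{\pi}>0$ componentwise; this $\vec{\pi}$ is precisely the unique fixed point of Eq.~\ref{equ:3} lying in $\Delta$, and it is by definition the steady--state distribution of the ergodic, homogeneous chain.

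For the stability claim I would solve the linear system explicitly, $\vec{x}(t)=e^{tQ^{\top}}\vec{x}(0)=(e^{tQ})^{\top}\vec{x}(0)$, where $e^{tQ}=P(t)$ is the chain's transition semigroup. Since the chain is finite and irreducible, $P(t)\to\mathbf{1}\,\vec{\pi}^{\top}$ as $t\to\infty$, hence $(e^{tQ})^{\top}\to\vec{\pi}\,\mathbf{1}^{\top}$ and $\vec{x}(t)\to\vec{\pi}\,(\mathbf{1}^{\top}\vec{x}(0))=\vec{\pi}$, using $\mathbf{1}^{\top}\vec{x}(0)=1$; the convergence is uniform on $\Delta$, which also yields Lyapunov stability, so the protocol is stable with the unique, initial--condition--independent limit $\vec{\pi}$. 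To reconcile this with the linearization criterion of Section~3: for $c>\max_i|q_{ii}|$ the matrix $Q+cI$ is nonnegative and irreducible, so by Perron--Frobenius $0$ is a \emph{simple} eigenvalue of $Q$ (left eigenvector $\vec{\pi}$, right eigenvector $\mathbf{1}$) and every other eigenvalue has strictly negative real part; restricted to the invariant affine hyperplane $\{\sum_i x_i=1\}$ the fixed point $\vec{\pi}$ is therefore hyperbolic and the eigenvalue criterion established above gives local asymptotic stability, which the semigroup computation promotes to global.

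The one genuine subtlety, and the main obstacle, is the zero eigenvalue of $Q^{\top}$: the fixed point of the ambient $k$--dimensional system is \emph{not} hyperbolic, so the corollaries of Section~3 cannot be quoted verbatim. The remedy is to pass to the invariant codimension--one subspace $\{\sum_i x_i=1\}$, on which the restricted linear flow \emph{is} hyperbolic; one must check that this passage is legitimate, i.e.\ that the hyperplane is genuinely invariant (the first step) and that $\vec{\pi}$ lies in the relative interior of $\Delta$ so trajectories remain bounded away from $\partial\Delta$ (from $\vec{\pi}>0$). Once that is in place, uniqueness of $\vec{\pi}$, independence of the initial configuration $\vec{x}(0)$, and the identification of the limit with the Markov--chain steady state are immediate.
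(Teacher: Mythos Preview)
Your proposal is correct and follows essentially the same approach as the paper: identify Eq.~\ref{equ:3} as the Kolmogorov forward equation of a continuous-time Markov chain with generator $Q=(q_{ij})$, observe that the hypothesis $\lambda_j p_{ij}\ne 0$ for $i\ne j$ makes the chain irreducible, and then invoke the classical ergodic theory (the paper simply cites \cite{Klein}) to conclude that a unique stationary distribution exists and attracts every initial configuration. Your version is in fact considerably more detailed than the paper's---the explicit semigroup computation, the Perron--Frobenius spectral argument, and the careful treatment of the zero eigenvalue via restriction to the invariant hyperplane are all absent from the paper, which does not attempt to reconcile this result with the hyperbolicity criterion of Section~3.
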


\section{A special case of Random pairing population protocols\\ (Linear Viral Protocols -- \LVP)}

Now, let us assume that all reviewing agents adopt the state of ``the first man they meet in the street''. This is clearly the case when the reviewing agent draws a pairing agent at random from the population (according to the uniform probability distribution across agents) and adopts the state of the so sampled agent. This is similar to the case of the protocols of \cite{AADFP04} where the rules are $(q_i, q_k) \mapsto (q_m, q_r)$ with $r,m \in \{i,j\}$. Formally then
$$
p_{ij} \left( \vec{x} \right) = x_j \qquad \quad \forall i,j \in K, \, \forall x(t)
$$
Now Eq.~\ref{equ:3} becomes
$$
\dot{x_i} \, = \, \sum_{j \in K} x_j x_i \lambda_j (x) \, - \, \lambda_i(x) x_i
$$
i.e.
\begin{equation}\label{equ:51}
\dot{x_i} \, = \, \left( \sum_{j \in K} x_j \lambda_j (x) - \lambda_i(x) \right) \cdot x_i
\end{equation}

We now propose a ``linear'' model in order to capture the immunity that an agent has against other agents in the population. We postulate that agents immunity depend on their states. One can imagine immunity to be a measure of the degree of protection of agents when they interact. So, when an agent in state $q_i$ interacts with an agent in state $q_j$ we measure the immunity of the $(q_i, q_j)$ pair by an integer $\texttt{a}_{ij}$ and we require here that $\texttt{a}_{ij} = \texttt{a}_{ji}$. It is then natural to assume that agents in state $q_i$ will wish to review their state more often when their immunity is low. In particular we assume here that any agent in state $q_i$ has a review rate $\lambda_i \left( \vec{x} \right)$ that is \textit{linearly decreasing} in the average immunity of the agent in state $q_i$. This is the simplest possible model. the formal definitions follow:
\begin{definition}[Immunity of a state]
Let $A=\{\texttt{a}_{ij}\}$ be a symmetric matrix of integers. The immunity of an agent in state $q_i$ is 
$t_i\left(\vec{x}\right) = \texttt{a}_{i1}x_1 + \ldots + \texttt{a}_{ik}x_k$.
\end{definition}
\begin{definition}[Average immunity of a population protocol, in a particular configuration]
Let $A$ be a symmetric matrix of integers. The average immunity of the population, in configuration $\left\{ x_i \right\}$, is:
$t\left(\vec{x}\right) = \sum_{i \in K} x_i t_i \left(\vec{x} \right)$.
\end{definition}

\begin{definition}[Linear Viral Protocols -- \LVP]
The Linear Viral Protocols are switching population protocols with review rates of agents
$$
\lambda_i\left(\vec{x} \right) = \gamma - \delta t_i \left( \vec{x} \right)
$$
where $\gamma, \delta \in \Re$, $\delta > 0$ and also $\gamma / \delta \ge t_i \left( \vec{x} \right)$, 
$\forall \vec{x} + \Delta$, $\forall i$.
\end{definition}
Now Eq.~\ref{equ:51} becomes
\begin{equation}\label{equ:52}
\dot{x_i} \, = \, \delta \left( t_i\left(\vec{x}\right) - t\left(\vec{x}\right) \right) x_i
\end{equation}
Note, now, that this equation is a constant rescaling of the popular ``replicator dynamics''  of Evolutionary Game Theory (see, e.g., \cite{W97}).

\begin{definition}
The general Lotka-Volterra equation for $k$ types of a population is of the form
$$
\dot{x_i} \, = \, x_i \left( r_i + \sum_{j=1}^k \texttt{a}_{ij} x_j \right) \qquad i=1\ldots k
$$
where $r_i$, $\texttt{a}_{ij}$ are constant.
\end{definition}
By the equivalence of the Replicator Dynamics with the Lotka-Volterra systems we then get:
\begin{theorem}{\rm
The dynamics of the linear viral protocols are \textit{equivalent} to the Lotka-Volterra dynamics.
}\end{theorem}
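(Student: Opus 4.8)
The plan is to recognize Eq.~\ref{equ:52} as a constant time-rescaling of the \emph{replicator equation} with payoff matrix $A=\{\texttt{a}_{ij}\}$, and then to reproduce the classical correspondence between the replicator dynamics on the $k$-type simplex and the Lotka--Volterra dynamics on $\Re^{k-1}_{\ge 0}$ (see, e.g., \cite{W97}). First I would eliminate $\delta$: setting $\tau=\delta t$ turns Eq.~\ref{equ:52} into $\frac{dx_i}{d\tau}=\big(t_i(\vec{x})-t(\vec{x})\big)x_i=x_i\big((A\vec{x})_i-\vec{x}^{\top}A\vec{x}\big)$, which is exactly the standard replicator equation; the symmetry of $A$ will play no role in the argument.

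Next I would carry out the dimension-reducing change of variables. Working in the interior of $\Delta$, where $x_k>0$ (the faces of $\Delta$ are forward invariant, so this region is invariant), put $y_i=x_i/x_k$ for $i=1,\dots,k-1$. Differentiating and using $\dot x_i/x_i=t_i(\vec{x})-t(\vec{x})$ for every $i$, the average-immunity term cancels:
\[
\frac{\dot y_i}{y_i}=\frac{\dot x_i}{x_i}-\frac{\dot x_k}{x_k}=t_i(\vec{x})-t_k(\vec{x})=\sum_{j=1}^{k}(\texttt{a}_{ij}-\texttt{a}_{kj})\,x_j .
\]
Substituting $x_j=x_k y_j$ for $j<k$ (and $y_k\equiv 1$) factors out a common $x_k$, giving
\[
\dot y_i=x_k\, y_i\Big(r_i+\sum_{j=1}^{k-1}\texttt{b}_{ij}\,y_j\Big),\qquad r_i=\texttt{a}_{ik}-\texttt{a}_{kk},\quad \texttt{b}_{ij}=\texttt{a}_{ij}-\texttt{a}_{kj}.
\]

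Finally I would absorb the positive factor $x_k$ by a state-dependent reparametrization of time: define the new time $s$ by $ds=x_k(\tau)\,d\tau$. Since $x_k>0$ along every interior trajectory, $s$ is a strictly increasing smooth function of $\tau$, so this is a genuine orbit equivalence, and in the new time
\[
\frac{dy_i}{ds}=y_i\Big(r_i+\sum_{j=1}^{k-1}\texttt{b}_{ij}\,y_j\Big),\qquad i=1,\dots,k-1,
\]
which is precisely a Lotka--Volterra system for $k-1$ types in the normal form of the preceding definition. Conversely, $(x_1,\dots,x_k)\mapsto(y_1,\dots,y_{k-1})$ is a diffeomorphism of $\mathrm{int}\,\Delta$ onto $\Re^{k-1}_{>0}$ (inverse $x_i=y_i/(1+\sum_{l}y_l)$, $x_k=1/(1+\sum_l y_l)$), so the trajectories of the two systems are in bijective correspondence; hence the \LVP dynamics and the Lotka--Volterra dynamics are equivalent.

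The main obstacle is conceptual rather than computational: the equivalence is not a plain diffeomorphism but an orbit equivalence that also rescales time by the state-dependent (yet strictly positive) factor $x_k$. The points to get right are (i) that $x_k$ never vanishes on the relevant trajectories, which follows from the invariance of $\Delta$ and of its faces guaranteed by the Picard--Lindel\"of theorem, so the time change $ds=x_k\,d\tau$ is legitimate; and (ii) the bookkeeping that the harmless constants — the original $\delta>0$ and the offsets $\texttt{a}_{kj},\texttt{a}_{kk}$ — only rescale time and shift the growth rates $r_i$, leaving the target in the exact Lotka--Volterra form.
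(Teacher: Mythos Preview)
Your argument is correct and is in fact the standard Hofbauer transformation that establishes the replicator/Lotka--Volterra orbit equivalence. The paper itself does not supply a proof of this theorem at all: it simply observes that Eq.~\ref{equ:52} is (a constant rescaling of) the replicator dynamics and then invokes the known equivalence, citing the evolutionary game theory literature. So you have written out exactly the argument the paper is appealing to; there is nothing to compare, since you have strictly more detail than the paper, not a different route.

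One small remark: the paper's definition of the ``general Lotka--Volterra equation'' is stated for $k$ types, while your reduction lands in a $(k-1)$-dimensional Lotka--Volterra system. This is the usual outcome of the Hofbauer map (the replicator equation on the $k$-simplex corresponds to Lotka--Volterra in $\Re^{k-1}$), and it is what ``equivalence'' means in this context; it is worth stating this explicitly so that the mismatch in the number of types does not look like an error.
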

We can then give an alternative sufficient condition for the (asymptotic) stability of the Linear Viral Protocols.
\begin{theorem}\label{the:5}{\rm
Let $x^*$ be a fixed point of Eq.~\ref{equ:52}, i.e.,
$t_i \left(\vec{x}\right) = t \left(\vec{x}\right)$ is satisfied for $\vec{x} = \vec{x}^*$.
If $\sum_{i=1}^{k} x_i^* t_i \left( \vec{x} \right) > t\left(\vec{x} \right)$ for any $\vec{x}$ in a region around $\vec{x}^*$, then $\vec{x}^*$ is asymptotically stable.
}\end{theorem}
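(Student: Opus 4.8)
The plan is to produce a strict Lyapunov function for Eq.~\ref{equ:52} centred at $\vec{x}^*$, namely the $\vec{x}^*$-weighted relative entropy
\[
V(\vec{x}) \;=\; \sum_{i:\,x_i^*>0} x_i^*\,\ln\frac{x_i^*}{x_i},
\]
which is the classical device for replicator-type dynamics. First I would record that $V$ is well defined and smooth on the (relatively open) set of $\vec{x}\in\Delta$ with $x_i>0$ for every $i\in\mathrm{supp}(\vec{x}^*)$, that $V(\vec{x}^*)=0$, and that $V(\vec{x})>0$ for every other $\vec{x}$ on the face spanned by $\mathrm{supp}(\vec{x}^*)$ — this is just Gibbs' inequality (equivalently Jensen's inequality applied to $-\ln$). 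Hence $V$ is a legitimate positive-definite Lyapunov candidate in a neighbourhood of $\vec{x}^*$.

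Next I would differentiate $V$ along a solution of Eq.~\ref{equ:52}. Since $\dot{x_i}/x_i=\delta\,\bigl(t_i(\vec{x})-t(\vec{x})\bigr)$ wherever $x_i>0$, we obtain
\[
\dot{V} \;=\; -\sum_{i:\,x_i^*>0} x_i^*\,\frac{\dot{x_i}}{x_i} \;=\; -\delta\Bigl(\,\sum_{i} x_i^*\,t_i(\vec{x})\;-\;t(\vec{x})\Bigr),
\]
using $\sum_i x_i^*=1$ to pull $t(\vec{x})$ out of the sum and the fact that $x_i^*=0$ off the support. By hypothesis $\sum_i x_i^* t_i(\vec{x})>t(\vec{x})$ for all $\vec{x}\neq\vec{x}^*$ in a region around $\vec{x}^*$, and $\delta>0$ by the definition of \LVP, so $\dot{V}<0$ there while $\dot{V}(\vec{x}^*)=0$. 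The standard Lyapunov asymptotic stability theorem (together with the equivalence of Eq.~\ref{equ:52} to the replicator dynamics already noted) then yields that $\vec{x}^*$ is asymptotically stable.

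The delicate point — and the step I expect to cost the most care — is the domain on which this argument is valid. $V$ blows up as any $x_i\to 0$ with $x_i^*>0$, so one must confine the analysis to the relative interior of the face $F=\{\vec{x}\in\Delta:\ x_i=0\ \forall i\notin\mathrm{supp}(\vec{x}^*)\}$; this is harmless because each such face is forward-invariant under Eq.~\ref{equ:52} (if $x_i(0)=0$ then $\dot{x_i}\equiv 0$), and if $\vec{x}^*$ is an interior point of $\Delta$ there is nothing to confine. A little extra work is then needed to upgrade stability within $F$ to stability within $\Delta$ when $\vec{x}^*$ lies on the boundary: one combines the strict decrease of $V$ on $F$ with the fact that the coordinates $x_i$, $i\notin\mathrm{supp}(\vec{x}^*)$, are driven toward $0$ in a neighbourhood of $\vec{x}^*$. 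But the skeleton above is the heart of the proof, and the ``region around $\vec{x}^*$'' in the hypothesis is precisely what makes $\dot{V}$ strictly negative on the set where it matters.
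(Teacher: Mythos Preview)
Your proposal is correct and follows essentially the same route as the paper: both use the relative entropy $-\sum_i x_i^*\ln(x_i/x_i^*)$ as a strict Lyapunov function, establish nonnegativity via Jensen's inequality, and compute $\dot V=-\delta\bigl(\sum_i x_i^* t_i(\vec{x})-t(\vec{x})\bigr)<0$ from Eq.~\ref{equ:52} and the hypothesis. Your treatment of the support of $\vec{x}^*$ and the boundary-face invariance is in fact more careful than the paper's, which tacitly assumes an interior fixed point.
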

In order to prove our theorem, we first consider the relative entropy of $\vec{x}$ and $\vec{x}^*$ as
\begin{equation}\label{equ:proof}
E(x) = - \sum_{i=1}^{k} x_i^* \ln\left(\frac{x_i}{x_i^*}\right) 
\end{equation}
Clearly $E(x^*) = 0$.
Then we need to prove the following claim:
\begin{claim}\label{claim:1}
$E(x) \ge E(x^*)$, $\forall \vec{x}$
\end{claim}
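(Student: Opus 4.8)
The plan is to recognize $E(x)$ as the relative entropy (Kullback--Leibler divergence) $D(\vec{x}^*\,\|\,\vec{x})=\sum_i x_i^*\ln(x_i^*/x_i)$ and to prove the claim by Gibbs' inequality. Since $E(x^*)=0$ has already been observed, the claim reduces to showing $E(x)\ge 0$ for all $\vec{x}$, i.e. $\sum_{i=1}^k x_i^*\ln(x_i/x_i^*)\le 0$.

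Before the computation I would pin down the domain. We are working in a neighbourhood of a fixed point $\vec{x}^*$ of Eq.~\ref{equ:52}, which we take to be interior to the simplex: $x_i^*>0$ for every $i$ (otherwise $E$ is not even defined), while $\vec{x}$ ranges over $\Delta$, so that $\sum_i x_i=\sum_i x_i^*=1$. If some coordinate $x_i$ vanishes while $x_i^*>0$, the corresponding term of $E(x)$ is $+\infty$ and the inequality is immediate, so it suffices to treat $\vec{x}$ with all $x_i>0$.

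The core estimate is a one-liner: apply the elementary inequality $\ln z\le z-1$, valid for all $z>0$ with equality iff $z=1$, to $z=x_i/x_i^*$, multiply by $x_i^*>0$, and sum over $i$:
$$
\sum_{i=1}^k x_i^*\ln\!\left(\frac{x_i}{x_i^*}\right)\ \le\ \sum_{i=1}^k x_i^*\!\left(\frac{x_i}{x_i^*}-1\right)\ =\ \sum_{i=1}^k x_i\ -\ \sum_{i=1}^k x_i^*\ =\ 1-1\ =\ 0 ,
$$
hence $E(x)\ge 0=E(x^*)$. (Equivalently one can invoke Jensen's inequality for the concave function $\ln$, which gives $\sum_i x_i^*\ln(x_i/x_i^*)\le\ln\sum_i x_i=0$.) From the equality case of $\ln z\le z-1$ one also gets that $E(x)=E(x^*)$ holds only when $x_i=x_i^*$ for all $i$; this strictness is exactly what will later make $E$ a strict Lyapunov function in the proof of Theorem~\ref{the:5}, so I would record it here as well.

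I do not expect any genuine obstacle in this claim itself; the only points requiring care are the boundary of $\Delta$ (coordinates equal to zero, handled by the $+\infty$ remark above) and the standing assumption that $\vec{x}^*$ is interior so that $E$ is finite and differentiable on a neighbourhood of $\vec{x}^*$. Both should be stated explicitly before running the displayed computation.
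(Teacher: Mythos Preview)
Your proof is correct and is essentially the same as the paper's: both show $E(x)\ge 0$ by exploiting the concavity of $\ln$, the paper via Jensen's inequality $-\sum_i x_i^*\ln(x_i/x_i^*)\ge -\ln\sum_i x_i=0$, and you via the equivalent tangent-line bound $\ln z\le z-1$ (which you also note is the Jensen argument in disguise). Your additional remarks on the boundary of $\Delta$, the interiority of $\vec{x}^*$, and the equality case are not in the paper but are welcome clarifications.
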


\begin{proof}
From Jensen's inequality it folds:
$$
\exp\left( f(x) \right) \, \ge \, f(\exp x)
$$
where $\exp()$ is the expectation, $x$ a random variable and $f$ a convex function. Thus Eq.~\ref{equ:proof} becomes
$$
E(x) \ge - \ln\left( \sum_{i=1}^k x_i^* \frac{x_i}{x_i^*}\right) \ge -\ln\left(\sum_{i=1}^k x_i \right) = - \ln 1 = 0
$$
\end{proof}

\begin{proof}
Based on Claim~\ref{claim:1} we can prove Theorem~\ref{the:5} as follows:
\begin{eqnarray*}
\frac{dE\left(\vec{x}(t)\right)}{dt} 
&=& \sum_{i=1}^k\frac{dE}{dx_i} \dot{x_i} \\
&=& - \sum_{i=1}^k \frac{x_i^*}{x_i} \dot{x_i} \\
&=& - \sum_{i=1}^k \delta \left( t_i\left(\vec{x}\right) - t\left(\vec{x}\right) \right) x_i^* \qquad \mbox{(due to Eq.~\ref{equ:52})}\\
&=& - \delta \left[ \sum_{i=1}^k \vec{x}^* \left( t_i(x) - t\left(\vec{x}\right)  \right) \right]\\
&<& 0 \qquad \mbox{by assumption}
\end{eqnarray*}
Thus, in a region around $\vec{x}^*$, $\frac{dE}{dt} < 0$.
Then $E$ is a (strict) Lyapounov function (see, e.g., \cite{HS98}, pp. 18--19) and thus $\vec{x}^*$ is stable asymptotically.
\end{proof}

\section{Conclusions}

The \textit{population protocol} model of Angluin et. al. \cite{AADFP04} consists of a (large) population of finite-state \textit{agents} that interact in pairs. Each interaction updates the state of both participants according to a transition function based on the pair of the participants' previous states. A natural probabilistic model, proposed in \cite{AADFP04}, assumes each interaction to occur between a pair of agents chosen uniformly at random. We call the protocols of \cite{AADFP04} by the term ``Probabilistic Population Protocols'' (\PPP). \cite{DF01} studied the acquisition and propagation of knowledge in the probabilistic model of random interactions between all paris in a population (conjugating automata). Curiously, the differential equation approach for such protocols was not proposed till now.

We imagine here a continuoum of agents. By the law of large numbers, one can model the underlying aggregate stochastic process as a deterministic flow system. Our main proposal here is to exploit the powerful tools of continuous nonlinear dynamics in order to examine questions (such as stability) of such protocols. 

We have extended the class of \cite{AADFP04} by defining a general model of ``Switching Population Protocols'' (\SPP). We then examined stability for this general model and two important subclasses. Our main point is that one can study stability and population dynamics of protocols, via nonlinear differential equations that describe quite accurately the (discrete) population protocol dynamics when the population is very large. The ``differential equations'' approach was indicated in the past for the analysis of evolution of algorithms with Random Inputs, by \cite{MU05,NW}. Our approach provides a sufficient condition for stability of \PPP of \cite{AADFP04} that can be checked in polynomial time. It also gives a more general way to \textit{specify} population protocols, that reveals interesting classes.



\begin{thebibliography}{10}

\bibitem{AADFP04}
Dana Angluin, James Aspnes, Zo\"{e} Diamadi, Michael~J. Fischer, and Ren\'{e}
  Peralta.
\newblock Computation in networks of passively mobile finite-state sensors.
\newblock In {\em 23rd Annual ACM Symposium on Principles of Distributed
  Computing ({PODC})}, pages 290--299, New York, NY, USA, 2004. ACM.

\bibitem{AADFP06}
Dana Angluin, James Aspnes, Zo{\"e} Diamadi, Michael~J. Fischer, and Ren{\'e}
  Peralta.
\newblock Computation in networks of passively mobile finite-state sensors.
\newblock {\em Distributed Computing}, 18(4):235--253, 2006.

\bibitem{AAE06}
Dana Angluin, James Aspnes, and David Eisenstat.
\newblock Fast computation by population protocols with a leader.
\newblock In {\em 20th International Symposium on Distributed Computing
  ({DISC})}, volume 4167 of {\em Lecture Notes in Computer Science}, pages
  61--75. Springer, 2006.

\bibitem{DF01}
Zo{\"e} Diamadi and Michael~J. Fischer.
\newblock A simple game for the study of trust in distributed systems.
\newblock {\em Wuhan University Journal of Natural Sciences}, 6(1-2):72--82,
  March 2001.
\newblock Also appears as Yale Technical Report TR-1207, January 2001,
  \texttt{ftp://ftp.cs.yale.edu/pub/TR/tr1207.ps}.

\bibitem{H60}
Philip Hartman.
\newblock A lemma in the theory of structural stability of differential
  equations.
\newblock {\em American Mathematical Society}, 11(4):610--620, 1960.

\bibitem{HS74}
Morris Hirsch and Stephen Smale.
\newblock {\em Differential Equations, Dynamical Systems and Linear Algebra}.
\newblock Academic Press, 1974.

\bibitem{HS98}
Josef Hofbauer and Karl Sigmund.
\newblock {\em Evolutionary Games and Population Dynamics}.
\newblock Cambridge University Press, 1998.

\bibitem{Klein}
L.~Kleinrock.
\newblock {\em Queueing Systems, Theory}, volume~I.
\newblock John Wiley \& Sons, 1975.

\bibitem{MU05}
Michael Mitzenmacher and Eli Upfal.
\newblock {\em Probability and Computing: Randomized Algorithms and
  Probabilistic Analysis}.
\newblock Cambridge University Press, 2005.

\bibitem{W97}
J\"orgen~W. Weibull.
\newblock {\em Evolutionary Game Theory}.
\newblock MIT Press, 1997.

\bibitem{NW}
Nicholas~C. Wormald.
\newblock Differential equations for random processes and random graphs.
\newblock {\em Annals of Applied Probability}, 5:1217--1235, 1995.

\end{thebibliography}
\end{document}